\newtheorem{lemma}{Lemma}[section]
\newtheorem{theorem}[lemma]{Theorem}
\newtheorem{definition}[lemma]{Definition}
\newtheorem{corollary}[lemma]{Corollary}
\newtheorem{proposition}[lemma]{Proposition}
\newtheorem{claim}[lemma]{Claim}
\newcommand{\opt}{\textrm{\sc OPT}}
\newcommand{\etal}{et al.\ }
\newcommand{\eps}{\epsilon}
\newcommand{\floor}[1]{\lfloor #1 \rfloor}
\newcommand{\ceil}[1]{\lceil #1 \rceil}
\newcommand{\blaps}{\textrm{\sc BLAPS}}
\newcommand{\laps}{\textrm{\sc LAPS}}
\newcommand{\On}{\texttt{On}}
\newcommand{\Opt}{\texttt{Opt}}
\newcommand{\ddphi}{\frac{\mathrm{d}}{\mathrm{dt}} \Phi}
\newcommand{\ddopt}{\frac{\mathrm{d}}{\mathrm{dt}} G^*}
\newcommand{\ddhopt}{\frac{\mathrm{d}}{\mathrm{dt}} G^*}
\newcommand{\ddg}{\frac{\mathrm{d}}{\mathrm{dt}} G}
\newcommand{\ddt}{\frac{\mathrm{d}}{\mathrm{dt}} }
\newcommand{\dt}{\mathrm{dt}}
\newcommand{\cR}{{\cal R}}
\newcommand{\cS}{{\cal S}}
\newcommand{\cQ}{{\cal Q}}
\newcommand{\parr}{\mathcal{P}}
\newcommand{\rank}{\mathbf{rank}}
\begin{document}

\title{Scheduling to Minimize Energy and Flow Time in Broadcast Scheduling }

\author{
Benjamin Moseley\thanks{Department of Computer Science, University of
Illinois, 201 N.\ Goodwin Ave., Urbana, IL 61801. {\tt
bmosele2@illinois.edu}. Partially supported by NSF grant
CNS-0721899.}\\\\\vspace{4mm} 
}

\date{}
\maketitle \vspace{-7mm}

\begin{abstract}

In this paper we initiate the study of minimizing power consumption in the broadcast scheduling model.  In this setting there is a wireless transmitter.  Over time requests arrive at the transmitter for pages of information.  Multiple requests may be for the same page.  When a page is transmitted, all requests for that page receive the transmission simulteneously.  The speed the transmitter sends data at can be dynamically scaled to conserve energy.  We consider the problem of minimizing flow time plus energy, the most popular scheduling metric considered in the standard scheduling model when the scheduler is energy aware.   We will assume that the power consumed is modeled by an \emph{arbitrary} convex function. For this problem there is a $\Omega(n)$ lower bound.  Due to the lower bound, we consider the resource augmentation model of Gupta \etal \cite{GuptaKP10}.  Using resource augmentation, we give a \emph{scalable} algorithm.  Our result also gives a scalable \emph{non-clairvoyant} algorithm for minimizing weighted flow time plus energy in the standard scheduling model.

\end{abstract}

\setcounter{page}{0} \thispagestyle{empty} \clearpage

\section{Introduction}

Wireless transmitters can typically be utilized in a variety of ways.  This  presents the designer with tradeoffs between power, signal range, bandwidth, cost, ect.  In this paper we consider the trade off between power and performance.  Reducing the energy consumption is of importance in many systems.  For example, in ad hoc wireless networks a typical transmitter is battery operated and, therefore, energy conservation is critical.  There are two modes of a wireless transmitter where power can be saved. (1) During idle times (2) During transmit times. Work on the first mode focuses on  putting the system to sleep when not in use \cite{Chen01,Ad00}.  By appropriately putting the system to sleep, energy consumption can be drastically reduced.   Another way the transmitter can reduce power is scaling the speed of the wireless transmission  \cite{IraniSG07}. By using more power a signal can be sent at a faster rate or, to save power, the signal can be sent at a slower rate.    We focus on reducing energy in the second mode. Reducing energy by changing the transmission speed is naturally related to the well studied model of speed scaling in scheduling theory.  However, now that we are in interested in wireless networks, this motivates a generalization of the standard speed scaling model.
 
Companies such as IBM and AMD are producing processors whose speed can be dynamically scaled by the operating system.  Now an operating systems can control the power consumed in the system by scaling the processor speed.  To model this is scheduling theory, it is assumed that there is a power function $P$ where $P(s)$ is the power use when running the processor at speed $s$.  A scheduler now not only chooses the job to schedule, but also the speed used to process the job.  In other words, a scheduler determines a job selection policy and an energy policy.  This is known as the speed scaling scheduling model.  

The standard speed scaling model is similar to our setting, except that we are interested in speed scaling in wireless networks.  Like the speed scaling setting, we assume there  is a power function $P$ where $P(s)$ is the power used when transmitting at speed $s$. To model the wireless network, we will assume a well-studied model known as the broadcast model.  In a broadcast network, there is a single server and $n$ pages of information are stored at the server.  Over time clients send requests for pages to the server.  Multiple clients could be interested in the same page of information.  When the sever broadcasts a page $p$, all outstanding requests for that page are simultaneously satisfied.  This is how the broadcast model differs from the standard scheduling model.  In the standard model, a each request (job) is for a unique page and broadcasting (processing) a page satisfies exactly one request. In the \emph{online} setting, the server is not aware of a request until it arrives to the system.  The broadcast model is motivated by applications in multicast systems and in wireless and LAN networks \cite{Wong88,AcharyaFZ95,AksoyF98,Hall03}.  Along with practical interest, the broadcast model has been studied extensively in scheduling literature \cite{BarnoyBNS98,AksoyF98,AcharyaFZ95,BartalM00,Hall03,GandhiKPS06}.  Similar models have been considered in queuing theory and the stochastic scheduling model \cite{DebS73,Deb84,Weiss79,WeissP81}.

The goal of the scheduler is to satisfy the requests in an order which optimizes a quality of service metric. Naturally, this metric depends on the needs of the system.  When speed scaling is allowed, the server has a dual objective.   One is to minimize the power usage and the other is to optimize the quality of service the clients receive.  The most popular metric in speed scaling literature is minimizing a  linear combination of flow time and total energy \cite{AlbersF07,BansalPS09,ChanELLMP09,GuptaKP10,BansalCP09}.  The flow time\footnote{Flow time is also known as waiting time or response time} of a request is the amount of time the server takes to satisfy the  request.  Let $F$ denote the total flow time of a schedule over all requests and let $E$ denote the energy consumption of the schedule.  The scheduler focuses on minimizing $G= F+E$. This has a natural interpretation.   Say the system is willing to spend one unit of energy to reduce $\rho$   units of flow time.  For example, a system designer may be able to justify spending 1 erg of energy to decrease the flow time by $\rho =10$ micro seconds.  The system designer would then desire a schedule that minimizes $F + 10 E$.  By scaling the units of energy and flow time, we can assume that $\rho=1$.  The main contribution of this work is to initiate the study of energy in the broadcast model. We focus on the the problem of minimizing flow time plus energy in the broadcast setting. Besides practical interest in the model, we believe this problem is of theoretical interest as it is a natural extension of previous work. \\

\noindent {\bf Results:}  In this paper we give an algorithm for minimizing total flow time plus energy in the online broadcast setting where the power function is an \emph{arbitrary} convex function.  There is a $\Omega(n)$ lower bound on the problem of minimizing  flow time in broadcast scheduling when all broadcasts are sent at a fixed rate and energy is not included in the objective \cite{KalyanasundaramPV00}. Since the power function is arbitrary, this lower bound also extends to the problem of minimizing flow time plus energy.  Thus we will resort to resource augmentation \cite{KalyanasundaramP95}.  The resource augmentation model we consider is the same as that introduced recently in \cite{GuptaKP10}.  Here, if our algorithm is given $1+\eps$ speed advantage over the adversary, then our algorithm uses power $P(\gamma)$ when broadcasting at a rate of $(1+\eps)\gamma$.  The rest of the paper will be devoted to proving the following theorem.

\begin{theorem}
\label{thm:main}
There exists an algorithm that with $1+\eps$ resource augmentation is $O(\frac{1}{\eps^3})$-competitive for minimizing total flow time plus energy in broadcast scheduling with an arbitrary power function where $0 <\eps \leq 1$. 
\end{theorem}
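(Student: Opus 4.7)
The plan is to exhibit an algorithm, which I will call \blaps, that couples a broadcast-aware job-selection rule with a speed-setting rule, and to analyze it by an amortized potential-function argument against the optimum. For the \emph{speed selection}, given convex power function $P$ and $n_a(t)$ unsatisfied requests under \blaps at time $t$, I set the transmission rate $s(t)$ so that $P(s(t)) = n_a(t)$, thereby equating the instantaneous power cost to the instantaneous flow contribution. The $(1+\eps)$ augmentation lets \blaps actually broadcast at rate $(1+\eps)s(t)$ while being charged only $P(s(t))$. For the \emph{job selection}, I use a broadcast analogue of Latest Arrival Processor Sharing: at each instant identify a fraction $\eps$ of the pages with pending requests whose latest request-arrival time is most recent, and broadcast them in parallel, each receiving an equal share of the bandwidth $(1+\eps)s(t)$.

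For the analysis, define a potential
\[
\Phi(t) \;=\; c\sum_p \phi\bigl(z_a(p,t),\, z_o(p,t)\bigr),
\]
where $z_a(p,t)$ and $z_o(p,t)$ measure the remaining work on page $p$ under \blaps and under the optimum respectively, and $c = \Theta(1/\eps^3)$. The goal is to establish the running inequality
\[
n_a(t) + P(s(t)) + \ddphi \;\le\; \frac{c'}{\eps^3}\Bigl(n_o(t) + P(s_o(t))\Bigr)
\]
at all times outside a measure-zero set of events, together with the claim that $\Phi$ does not jump at request arrivals (both $z_a$ and $z_o$ absorb the arrival simultaneously) and increases under optimum broadcasts only by amounts the right-hand side pays for. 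Integrating and using $\Phi(0)=0$ with $\Phi(\infty)\ge 0$ then yields the $O(1/\eps^3)$ competitive ratio.

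The main obstacle is the broadcast structure in bounding the decrease in $\Phi$ induced by \blaps's transmissions. In the unicast speed-scaling setting a unit of processed work reduces exactly one request's remaining time, so $-\ddphi$ scales linearly in $s(t)$; in broadcast, a single transmission of page $p$ simultaneously clears all pending requests for $p$, so the ``return'' per unit of bandwidth depends on how lopsided $z_a(p,t)$ is relative to $z_o(p,t)$. The key insight I would exploit is a pigeonhole on LAPS: if $p$ is among the $\eps$-fraction of youngest pages in the algorithm's queue, then a recent request for $p$ exists, so either the optimum broadcast $p$ even more recently, in which case $z_o(p,t)\ge z_a(p,t)$ and the contribution to $\Phi$ is dampened, or it has not, in which case $z_a(p,t)-z_o(p,t)$ is large and the broadcast clears a correspondingly large chunk of potential. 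Choosing $\phi$ to be convex in $(z_a-z_o)_+$ and using the simultaneity of broadcast to amplify the drop by a factor proportional to the number of requests on the broadcast page is what lets $-\ddphi$ dominate both $n_a(t)$ and $P(s(t))$. The three factors of $1/\eps$ arise respectively from (i) the LAPS selection fraction, (ii) the $(1+\eps)$ speed gap, and (iii) the slack in the convex-duality step comparing $P(s_o(t))$ to $P(s(t))$.
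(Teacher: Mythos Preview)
Your skeleton---set the speed so that $P(s(t))=|N_a(t)|$, use a LAPS-style selection on an $\eps$-fraction of the youngest work, and amortize with a potential---matches the paper, but the concrete content is either missing or off at each step.

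The potential has to be \emph{request}-indexed, not page-indexed. The paper uses
\[
\Phi(t)=\frac{1}{\eps}\sum_{J_{p,i}\in N_a(t)} \rank(J_{p,i})\,\frac{z_{p,i}(t)}{Q(\rank(J_{p,i}))},
\qquad z_{p,i}(t)=\frac{\On(p,i,t,\infty)\cdot\Opt(p,a_{p,i},t)}{\sigma_p}.
\]
The $\rank/Q(\rank)$ weight is the exchange rate between speed and flow (and is what lets concavity of $Q$ enter), and the \emph{product} form of $z_{p,i}$ is precisely what tames broadcast simultaneity: since $\sum_i \On(p,i,t,\infty)\le\sigma_p$, \opt's broadcast of a page at speed $s_o(t)$ can raise $\Phi$ by at most $\frac{1}{\eps}|N_a(t)|$ no matter how many requests for that page are pending. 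Your ``amplify the drop by the number of requests on the broadcast page'' intuition is backwards---simultaneity is used to \emph{cap the rise} caused by \opt, not to amplify the algorithm's drop. Likewise, selection must be on the $\eps|N_a(t)|$ latest \emph{requests}, not pages: the drop term is obtained by splitting $N'_a(t)$ into requests \opt\ has already finished (for which $\Opt(p,a_{p,i},t)\ge\sigma_p$, so $z_{p,i}$ decreases at full rate) versus requests still in $N_o(t)$, which are charged directly to $|N_o(t)|$. Page-level selection loses this comparison with $|N_o(t)|$, and a placeholder $\phi(z_a(p,t),z_o(p,t))$ does not encode any of this structure.

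Your accounting of the three $1/\eps$ factors is also wrong. The potential argument yields only $O(1/\eps^2)$, and only for the \emph{fractional} relaxation of broadcast (one $1/\eps$ from the $\beta=\eps$ selection fraction in the drop, one from the $1/\eps$ prefactor on $\Phi$ needed to absorb \opt's rise). The third $1/\eps$ comes from a separate online reduction (Theorem~\ref{thm:red} and Corollary~\ref{cor}) that converts a fractional broadcast schedule into an integral one using an extra $(1+\eps')$ augmentation at cost $O(1/\eps')$ in flow time, with no increase in energy. This reduction is nontrivial in the speed-scaling setting and is entirely absent from your plan; without it you have at best a fractional result.
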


Our algorithm is called \emph{scalable} since the minimum amount of resource augmentation is used to be $O(1)$-competitive. Given the strong lower bound, this is the best result that can be shown in the worst case setting up to a constant in the competitive ratio. Broadcast scheduling is a generalization of the standard scheduling model.  The standard model can be interpreted as each request being for a unique page.  In fact, energy plus flow time in the broadcast scheduling strictly generalizes weighted flow time plus energy in the standard speed scaling model.  Thus, our result also gives a scalable algorithm for minimizing energy plus weighted flow time  in standard scheduling setting.  In the standard model, an algorithm is said to be \emph{non-clairvoyant} if it is online and does not know the processing time of a job.    The algorithm we introduce is non-clairvoyant when interpreted in standard scheduling model.

\begin{theorem}\label{thm:non}
There exists a non-clairvoyant algorithm that with $1+\eps$ resource augmentation is $O(\frac{1}{\eps^3})$-competitive  for weighted flow time plus energy in the standard speed scaling model.
\end{theorem}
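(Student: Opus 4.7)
The plan is to reduce weighted flow time plus energy in the standard speed-scaling model to total flow time plus energy in the broadcast model, and then invoke Theorem~\ref{thm:main}. Given an instance of the standard problem with jobs $j$ having release time $r_j$, processing requirement $p_j$, and weight $w_j$, I would first assume integer weights without loss of generality, since rational weights reduce to the integer case by clearing denominators, and real weights reduce to the rational case by a routine limit argument (the competitive ratio does not depend on the weights). Then construct a broadcast instance with one distinct page per job, where page $j$ has length $p_j$, and at time $r_j$ issue exactly $w_j$ simultaneous requests for page $j$.

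Next I would verify that the objectives coincide on every schedule. Since a single broadcast of page $j$ serves all $w_j$ of its outstanding requests at once, if the broadcast completes at time $C_j$ then page $j$ contributes $w_j(C_j - r_j)$ to the broadcast flow time; summed over all pages this equals $\sum_j w_j (C_j - r_j)$, which is exactly the weighted flow time of the same schedule read back as a standard-scheduling schedule. Energy is unaffected, since the power function and the speed choices are identical in both interpretations. Therefore the optimum of the broadcast objective on the constructed instance equals the optimum weighted flow time plus energy on the original instance, and any schedule with $1+\eps$ resource augmentation in one model is a valid schedule with $1+\eps$ augmentation in the other.

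Finally, I would appeal to the remark preceding the theorem, which asserts that the algorithm of Theorem~\ref{thm:main}, when interpreted in the standard model, is non-clairvoyant. Under the reduction above, the arrival history visible to the broadcast algorithm encodes exactly $\{(r_j, w_j)\}$ and not $p_j$, so the induced standard-model algorithm never inspects processing times and is non-clairvoyant as required. Combining this with the competitive guarantee of Theorem~\ref{thm:main} applied to the constructed broadcast instance yields an $O(1/\eps^3)$-competitive non-clairvoyant algorithm with $1+\eps$ speed augmentation for weighted flow time plus energy, establishing Theorem~\ref{thm:non}. I do not anticipate a significant obstacle: the only subtle point is to confirm that the broadcast algorithm genuinely never consults page sizes when making decisions, and this is exactly what the preceding remark guarantees.
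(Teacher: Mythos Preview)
Your proposal is correct and is essentially the argument the paper has in mind: encode a weight-$w_j$ job as $w_j$ simultaneous requests for a dedicated page, observe that broadcast flow on this instance equals weighted flow, and note that the algorithm's decisions depend only on arrival times (hence weights), not on page sizes. The one refinement the paper adds is that in this special instance every request for a given page arrives at the same moment, so fractional broadcast already coincides with standard preemptive scheduling; consequently the fractional-to-integral step of Corollary~\ref{cor} is unnecessary and one may quote the $O(1/\eps^2)$ fractional bound directly rather than going through Theorem~\ref{thm:main}, although the stated $O(1/\eps^3)$ is of course implied either way.
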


Before this work, no non-trivial non-clairvoyant algorithm was known for (un)weighted flow time plus energy when the power function can be arbitrary in the standard speed scaling setting. However, it was known that no non-clairvoyant algorithm can be $O(1)$-competitive for this problem \cite{ChanELLMP09}.  Thus, resource augmentation is necessary for an algorithm to be $O(1)$-competitive. \\

\noindent {\bf Relation to Previous work:} In the standard scheduling setting, speed scaling has traditionally used power functions of the form $P(s) = s^\alpha$ for some $\alpha >0$.  These power functions were motivated by the fact that  the power used by CMOS based processors is approximately $s^3$. Recently, modeling the power function as an arbitrary convex was suggested to capture the power consumption of more complex systems \cite{BansalKN09}. As mentioned, in this work, we adopt this new general model.

 There is a large amount of literature on scheduling in the broadcast model.  The most popular scheduling metric considered is minimizing average flow time. Recently, it was shown that there exists an online scheduler that is a scalable algorithm for minimizing the total flow time of a schedule when broadcasts are sent at a fixed rate and energy is not considered in the objective \cite{ImM10, BansalKN09}. This work builds on the ideas and techniques given in \cite{BansalKN09}.  In particular, this paper uses the algorithm of \cite{BansalKN09} with the addition feature of being power aware.  Since the power function considered in this paper is an arbitrary convex function, our work generalizes these results.  \\

\noindent {\bf Previous work on Flow-time in Broadcast Scheduling:}  Minimizing flow time (without energy) in broadcast scheduling where broadcasts are sent at a fixed speed has a rich history beginning with the seminal work of \cite{KalyanasundaramPV00,BartalM00}.  In \cite{KalyanasundaramPV00},  Kalyanasundaram \etal showed that no online deterministic algorithm can be $\Omega(n)$-competitive.  This has since been extended to show that no randomized online algorithm can be $\Omega(\sqrt{n})$-competitive \cite{BansalCKN05}.  Due to these strong lower bounds, most previous work has focused on analyzing algorithms in a resource augmentation model \cite{KalyanasundaramP95}.  In this resource augmentation analysis, the online algorithm is given $s$-speed and is compared to a $1$-speed adversary.  An algorithm is said to be $s$-speed $c$-competitive for some objective function if the algorithm's objective when given $s$ speed is within a $c$ factor of the optimal solution's objective schedule given $1$-speed for every request sequence.    An algorithm that is $(1+\eps)$-speed $O(1)$-competitive is called scalable where $0 < \eps \leq 1$ since it is $O(1)$-competitive when given the minimum advantage over the adversary.

In the offline setting,  the problem was shown to be NP-Hard \cite{ErlebachH02,ChangEGK08}.  The first $O(1)$-speed $O(1)$-approximation was found in \cite{KalyanasundaramPV00}. The best positive result in the offline setting using resource augmentation is a $(1+\eps)$-speed $O(1)$-approximation \cite{BansalCKN05}.    Without resource augmentation, \cite{BansalCKN05} gave a $O(\sqrt{n})$-approximation.  This has since been improved to a $O(\log^2 n/ \log \log n)$ approximation \cite{BansalCS06}.  It is long standing open question whether or not this problem admits a $O(1)$-approximation.

It has been difficult to find competitive online algorithms for broadcast scheduling.  The first online algorithm was given by Edmonds and Pruhs in \cite{EdmondsP03}.  This algorithm was shown to be $(4+\eps)$-speed $O(1)$-competitive algorithm via a reduction to a different scheduling problem known as arbitrary speed up curves.   This reduction takes a $s$-speed $c$-competitive algorithm for the speed up curves setting and converts it into a $(2s)$-speed $c$-competitive algorithm for the broadcast setting.  In \cite{EdmondsP04} Edmonds and Pruhs showed that a natural algorithm Longest-Wait-Fist (LWF) is $6$-speed $O(1)$-competitive via a global charging argument.  The analysis of LWF has been improved to show that the algorithm is $(3.4+\eps)$-speed $O(1)$-competitive \cite{Chekuri09lwf}. 

Later, Edmonds and Pruhs \cite{EdmondsP09} gave an algorithm $\laps$ that is scalable in the arbitrary speed up curves setting.  Using the reduction \cite{EdmondsP03} this gives a $(2+\eps)$-speed $O(1)$-competitive algorithm for the broadcast setting. It was a long standing problem whether or not there existed a scalable online algorithm in the broadcast model.  This question was  resolved by Im and Moseley and Bansal \etal by finding scalable algorithms \cite{ImM10,BansalKN09}. The reduction of \cite{EdmondsP03} was improved by Bansal \etal \cite{BansalKN09} to show that a $s$-speed $c$-competitive algorithm for the speed up curves setting can be converted into a $s(1+\eps)$-speed $O(\frac{c}{\eps})$-competitive algorithm for the broadcast setting.  This shows that $\blaps$, the broadcast version of $\laps$, is a scalable algorithm in the broadcast model.  Besides improving the reduction, in \cite{BansalKN09} the algorithm $\blaps$ was analyzed directly via a potential function analysis.  This analysis has since been extended to show an algorithm that is $(2+\eps)$-speed $O(1)$-competitive for the $\ell_2$-norm of flow time \cite{GuptaIKMP10}.  Recently, a scalable algorithm was found for the $\ell_k$-norms of flow time for all $k>0$ \cite{EdmondsIM10}.  \\

\noindent {\bf Previous work on Speed-Scalaing:}  All of the previous work on broadcast scheduling has assumed that the scheduler broadcasts at some fixed speed. Although speed scaling has not been considered in broadcast scheduling, it has been considered extensively in the standard scheduling model.  As mentioned, the standard scheduling model can be interpreted as each request being for a unique page.  Recall that is the speed scaling setting, a power function $P$ is given where $P(s)$ is the power used when running the processor at speed $s$.    We first consider the traditional model where $P(s)=s^\alpha$ and $\alpha > 1$.     In \cite{PruhsUW08} an efficient algorithm was given for  the problem of minimizing flow time offline subject to a bound on the amount of power consumed.  This algorithm can be extended to find a schedule that minimizes the flow time plus energy in the offline setting.  The problem of minimizing flow time plus energy online was first studied by \cite{AlbersF07}.  \cite{BansalPS09} showed that the algorithm that runs jobs with power proportional to the number of outstanding requests is $4$-competitive for unit work jobs.  They also showed the Highest-Density-First algorithm coupled with this power strategy is $O(\frac{\alpha^2}{\log^2 \alpha})$ competitive for weighted flow plus energy.     This is since been improved in \cite{LamLTW08}  for unweighted flow plus energy to give a $O(\frac{\alpha}{\log \alpha})$-competitive algorithm. In \cite{ChanELLMP09} a $O(\alpha^3)$-competitive \emph{non-clairvoyant} algorithm was given.  Chan \etal in \cite{ChanEP09} gave a $O(\log m)$-competitive algorithm for the speed-up curves setting when there is $m$ processors. 

 Recently, \cite{BansalCP09} introduced the problem of minimizing flow time plus energy with an arbitrary convex power function.  Surprisingly, they were able to give an algorithm that is $3$-competitive in this general setting for flow plus energy.  This resolved a central question on whether an algorithm could be $O(1)$-competitive where the competitive ratio does not depend on $\alpha$.   Gupta \etal gave a scalable algorithm for minimizing weighted flow time plus energy in the case where there are $m$ machines, each machine may have a different power function, and each power function is an arbitrary convex function \cite{GuptaKP10}. As mentioned, in this paper we adopt the assumption that the power function is an arbitrary convex function.\ \\

\section{Preliminaries}

We begin by introducing some notation.  There are $n$ pages stored at the server.  Each page has a size $\sigma_p$.   A request for page $p$ is satisfied if it receives $\sigma_p$ pieces of page $p$ in sequential order.  That is, a request receives the transmission of page $p$ from start to finish in that order. This will be further elaborated on later.   Notice that by this definition, multiple requests can be satisfied by a single broadcast. Let $J_{p,i}$ denote the $i$th request for page $p$. Let $a_{p,i}$ be the time this request arrives to the server.  In the online setting, this is the first time the server is aware of the request. Let $f_{p,i}$ be the time the server satisfies request $J_{p,i}$. The total flow time of a schedule is $F= \sum_p \sum_i (f_{p,i} - a_{p,i})$.  The following definitions will be useful.

\begin{definition}[convex function]
A real valued function $f$ is convex if and only if for any $0  \leq \alpha \leq 1$ and any real valued $x$ and $y$ it is the case that $f(\alpha x + (1-\alpha)y) \leq \alpha f(x) + (1- \alpha)f(y)$ 
\end{definition}

\begin{definition}[concave function]
A real valued function $f$ is concave if and only if for any $0  \leq \alpha \leq 1$ and any real valued $x$ and $y$ it is the case that $f(\alpha x + (1-\alpha)y) \geq \alpha f(x) + (1- \alpha)f(y)$ 
\end{definition}

Using the definition of a concave function, we can easily show the following proposition.

\begin{proposition}
\label{prop:one}
For any real valued concave function $f$ where $f(0)=0$, the following holds
\begin{itemize}
\item For any positive real number $x$, $\frac{x-1}{x} \leq \frac{f(x-1)}{f(x)}$

\item  For any positive real number $x$ and any $\alpha \leq 1$, $\alpha f(x) \leq f(\alpha x)$

\end{itemize}
\end{proposition}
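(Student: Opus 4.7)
The plan is to establish the second bullet first and then derive the first as an immediate consequence. For the second bullet, I would apply the definition of concavity to the two points $x$ and $0$ with weight $\alpha$: writing $\alpha x = \alpha \cdot x + (1-\alpha)\cdot 0$, concavity gives $f(\alpha x) \geq \alpha f(x) + (1-\alpha) f(0)$, and since $f(0)=0$ by hypothesis this simplifies to $f(\alpha x) \geq \alpha f(x)$, as desired. Implicitly we need $\alpha \geq 0$ so that the expression is a genuine convex combination; the statement is meant in that regime.

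For the first bullet, the key observation is to pick the particular value $\alpha = \frac{x-1}{x}$, which lies in $[0,1]$ whenever $x \geq 1$. Applying the second bullet with this choice of $\alpha$ yields
\[
\frac{x-1}{x}\, f(x) \;\leq\; f\!\left(\frac{x-1}{x}\cdot x\right) \;=\; f(x-1),
\]
and dividing through by $f(x)$ (which we may assume positive in the cases of interest) produces the claimed inequality $\frac{x-1}{x} \leq \frac{f(x-1)}{f(x)}$. So the first bullet is essentially a one-line corollary of the second.

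There is really no significant obstacle here: each part is a one-step consequence of the definition of concavity once the right convex combination is identified. The only thing one has to spot is the substitution $\alpha = (x-1)/x$ that turns the first bullet into a direct application of the second, after which the argument is mechanical. The minor bookkeeping subtlety is keeping the parameters in the correct regime so that concavity applies (i.e.\ $\alpha \in [0,1]$, which corresponds to $x \geq 1$ in the first bullet), but this is just a matter of stating the intended domain of the inequality.
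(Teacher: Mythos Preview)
Your proposal is correct and matches the paper's approach: the paper simply states that the proposition follows directly from the definition of a concave function, and your argument is precisely that derivation. The only minor remark is that the paper does not spell out the proof at all, so your explicit identification of the convex combination $\alpha x = \alpha\cdot x + (1-\alpha)\cdot 0$ and the substitution $\alpha = (x-1)/x$ is exactly the intended one-line justification.
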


We will be given a power function $P: \mathbb{R} \rightarrow \mathbb{R}$.  The value of $P(s)$ is the power used when the server broadcasts with speed $s$.  In \cite{BansalCP09,GuptaKP10} with a small loss in the competitive ratio, it was assumed that power function $P$ satisfies the following conditions,

\begin{enumerate}
\item At all speeds, $P$ is defined, continuous and differentiable

\item  P(0) = 0

\item P is strictly increasing

\item P is strictly convex

\item P is unbounded

\end{enumerate}

In this paper we adopt these assumptions on $P$.  Notice that the speed could be bounded. If the algorithm is given $(1+\eps)$ resource augmentation and the maximum speed is $s_{max}$ then our algorithm's maximum speed is $(1+\eps)s_{max}$.  Let the function $Q= P^{-1}$.  That is, $Q(y)$ is the maximum speed of the processor with a limit of $y$ on the power used.  Notice that $Q(0)=0$ and that $Q$ is concave.  Let $s(t)$ denote the speed used at time $t$. Let $E = \int_{t=0}^\infty P(s(t))\dt$ be the total energy used.  The goal of the scheduler is to minimize $G= F+E$.

\subsection{Fractional Broadcast Scheduling}
In \emph{integral} broadcast scheduling (the standard model), a request $J_{p,i}$ for page $p$  is satisfied if it receives each of $\sigma_p$ pieces of page $p$ in sequential order.  That is, a page is divided into pieces.  At any time the scheduler decides which piece to broadcast.  A request is satisfied if it receives each of the pieces from start to finish, in that order.   We now define a different version of the broadcast scheduling problem called \emph{fractional} broadcast scheduling.  In this setting, the $\sigma_p$ unit sized pieces of page $p$ are indistinguishable.  Now a request $J_{p,i}$ is satisfied once the server devotes a total of $\sigma_p$ time units to page $p$ after time $a_{p,i}$.  In \cite{BansalKN09} it was shown that an online algorithm running at a fixed speed that satisfies a request $J_{p,i}$ by time $t$ in a fractional schedule can be converted into a different online algorithm that completes $J_{p,i}$ by time $t + \frac{3}{\eps'}(t- a_{p,i}) + \frac{5}{\eps'}$ in an integral broadcast schedule.  Here, the new algorithm is given an additional $\eps'$ resource augmentation.  

It is not obvious that this generalizes when the server can use varying speeds over time. In Appendix \ref{sec:red} we extend their result to the speed scaling setting. We prove the following theorem.  This theorem may be of independent interest, since it can be used to reduce integral broadcast scheduling to the fractional setting for a variety of objective functions where speeds can vary over time. 

\begin{theorem}
\label{thm:red}
Let $S$ be a fractional broadcast schedule where the server can vary its speed over time and let $0 \leq \eps' \leq 1$.  The schedule $\cS$ can be converted into a schedule $\cS'$ using $\eps'$ resource augmentation such that the schedule $\cS'$ satisfies the following properties
\begin{itemize}
\item The power used by $\cS'$ is at most the power used by $\cS$.
\item If a request $J_{p,i}$ is fractionally satisfied at time $f_{p,i}$ under $S$ then this request is integrally satisfied at time $f_{p,i} + \frac{2}{\eps'}(f_{p,i} - a_{p,i}) + \frac{5}{\eps'}$ under the schedule $S'$.
\item If the algorithm that generates $S$ is online then so is the algorithm that generates $S'$.
\end{itemize}

\end{theorem}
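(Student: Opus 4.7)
The plan is to lift the fixed-speed fractional-to-integral reduction of Bansal, Krishnaswamy and Nagarajan to the variable-speed setting. Let $s(t)$ be the instantaneous speed of $S$ at time $t$. Define $S'$ to consume power $P(s(t))$ at every time $t$; by the $\eps'$-resource-augmentation convention, $S'$ then broadcasts at the higher rate $(1+\eps')s(t)$. This choice immediately gives the first bullet (power bound) and the third bullet (online preservation) of the theorem: $S'$'s instantaneous power equals $S$'s at every moment, and the construction refers only to the past of $S$.

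What needs real work is the delay bound. I would have $S'$ use a \textbf{commit-and-complete} rule: $S'$ never preempts an ongoing broadcast, and each time $S'$ becomes free it starts the page with the largest current \emph{deficit}, meaning the amount of fractional work $S$ has devoted to that page since $S'$ last integrally completed it. Because speeds vary, the natural unit of accounting is accumulated broadcast work $W(a,b):=\int_a^b s(\tau)\,d\tau$ rather than elapsed time; in any interval, $S'$ accumulates work at rate $(1+\eps')$ times that of $S$. I would then define the backlog $B(t)$ as the total size of pages that $S$ has fractionally delivered but $S'$ has not yet integrally delivered, and prove the invariant $B(t) \le \frac{1}{\eps'}\cdot(\text{$W$-work of $S$ since the start of the current $S'$-busy period}) + O(\sigma_{\max})$, where the additive $O(\sigma_{\max})$ absorbs the at-most-one page currently broadcasting at $S'$ and the at-most-one page whose fractional broadcast just completed at $S$. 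For request $J_{p,i}$, combining this invariant with the elementary bound $\sigma_p \le W(a_{p,i},f_{p,i})$ (since $S$ fractionally completes $p$ within the window) lets one translate the backlog into a clock-time delay of $\frac{2}{\eps'}(f_{p,i}-a_{p,i}) + \frac{5}{\eps'}$ before $J_{p,i}$ is integrally served.

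The main obstacle is precisely this final passage from backlog to clock time, which is trivial at fixed speed but here requires keeping every intermediate inequality in work units and converting back only at the very end. A subtlety is that the commit-and-complete rule can delay a newly arriving deficit by one in-progress broadcast whose wall-clock length depends on $s(\tau)$; this is absorbed into the additive $\frac{5}{\eps'}$ term by noting that an in-progress broadcast has size at most $\sigma_{\max}$ in work units and that over any $S'$-busy interval the average of $s(\tau)$ is at least the fractional flow rate enjoyed by the waiting request. Once this translation is done carefully, the selection rule guarantees that every outstanding requested page is chosen within one backlog's worth of work, and the stated delay bound follows. The remainder of the argument is bookkeeping around the commit-and-complete rule and the definition of busy periods.
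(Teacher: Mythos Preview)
Your proposal has a genuine gap in the speed policy that cannot be patched without changing the approach. You let $S'$ run at speed $(1+\eps')s(t)$ at wall-clock time $t$, where $s(t)$ is whatever speed $S$ happens to be using at that moment. This makes the power bound immediate, but it ties $S'$'s processing rate to $S$'s \emph{current} activity rather than to the work $S'$ actually has to do. Here is a concrete failure. Let page $q$ have size $2$ with a request at time $0$, and page $p$ have size $1$ with a request at time $1$. Let $S$ broadcast $q$ at speed $1$ on $[0,1]$, then $p$ at speed $1$ on $[1,2]$ (so $f_{p}-a_{p}=1$), and thereafter resume $q$ at some tiny speed $\delta$. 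Under your commit-and-complete/largest-deficit rule, $S'$ commits to $q$ at time $0^{+}$ and finishes it at time $\tfrac{2}{1+\eps'}\in(1,2)$; it then commits to $p$, but by time $2$ it has done only $2\eps'$ units of $p$, and for $t>2$ it crawls at speed $(1+\eps')\delta$. Hence $p$ is integrally completed only at time $2+\tfrac{1-2\eps'}{(1+\eps')\delta}$, which for small $\delta$ exceeds the claimed bound $f_{p}+\tfrac{2}{\eps'}(f_{p}-a_{p})+\tfrac{5}{\eps'}=2+\tfrac{7}{\eps'}$ by an arbitrary amount. The sentence you flag as the crux---``over any $S'$-busy interval the average of $s(\tau)$ is at least the fractional flow rate enjoyed by the waiting request''---is exactly what fails: after time $2$ the average of $s(\tau)$ is $\delta$, while $p$'s fractional rate was $1$. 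Working in accumulated-work units does not save this, because the theorem's delay bound is stated in wall-clock time and the two time scales diverge arbitrarily once $s(t)$ drops.

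The paper's construction avoids this by coupling speed to work differently. The integral schedule does not track $s(t)$ at all; instead it \emph{replays} individual unit time slots that $S$ devoted to the chosen page, each at $(1+\eps')$ times that slot's original speed, so every replayed slot takes exactly $\tfrac{1}{1+\eps'}$ wall-clock time regardless of what $S$ is currently doing. The power bound is then no longer free and requires a short non-overlap lemma showing that each unit slot of $S$ is replayed at most once. The selection rule is also different and essential: the paper always dequeues the tuple of \emph{minimum width} $f_{q,j}-a_{q,j}$, not largest deficit. This localizes the competition, since every tuple that blocks a width-$w$ request itself has width at most $w$ and therefore corresponds to a slot $S$ used no earlier than $t_A-w$; a slot count on both sides then yields $t_B-t_A\le\tfrac{w+5}{\eps'}$ directly in clock time, with no work-to-time conversion. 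Your largest-deficit rule has no such localization, so even with a corrected speed policy you would still owe a separate argument to turn an aggregate backlog bound into the required per-request clock-time delay.
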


This theorem implies the following corollary.
\begin{corollary} \label{cor}
An algorithm with $1+\eps$ resource augmentation that is $c$-competitive for flow time plus energy in fractional broadcast scheduling can be converted into an algorithm that is $\frac{7c}{\eps'}$-competitive for integral broadcast scheduling and uses $(1+\eps)(1+\eps')$ resource augmentation where $0 < \eps' \leq 1$.  
\end{corollary}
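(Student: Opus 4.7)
The plan is to define the integral algorithm $A'$ by composing two black boxes: first run the given fractional algorithm $A$ with its $(1+\eps)$ resource augmentation to produce a fractional schedule $\cS$, and then feed $\cS$ into the conversion procedure of Theorem \ref{thm:red} with parameter $\eps'$ to obtain an integral schedule $\cS'$. Because the two augmentation factors compose multiplicatively (the conversion broadcasts at the rates $\cS$ asked for, just scaled by the extra $(1+\eps')$), the overall resource augmentation of $A'$ is $(1+\eps)(1+\eps')$. The online preservation clause of Theorem \ref{thm:red} ensures $A'$ is online whenever $A$ is.

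Next I would bound the cost of $\cS'$ against the cost of $\cS$ in the two components separately. The energy piece is immediate: Theorem \ref{thm:red} guarantees that the power used by $\cS'$ at every time is at most the power used by $\cS$, so $E_{\cS'} \leq E_{\cS}$. For flow time, Theorem \ref{thm:red} gives, per request,
\[
f'_{p,i} - a_{p,i} \;\leq\; \left(1+\tfrac{2}{\eps'}\right)(f_{p,i} - a_{p,i}) \;+\; \tfrac{5}{\eps'}.
\]
The one genuine step here is to absorb the additive $\tfrac{5}{\eps'}$ into a multiplicative bound; for this I use that any fractional schedule completing $J_{p,i}$ must devote $\sigma_p \geq 1$ units of service to page $p$ after $a_{p,i}$, so $f_{p,i} - a_{p,i} \geq 1$. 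Combined with $\eps' \leq 1$ this yields $f'_{p,i} - a_{p,i} \leq \tfrac{7}{\eps'}(f_{p,i}-a_{p,i})$ up to constants, and summing over requests gives $F_{\cS'} \leq \tfrac{7}{\eps'} F_{\cS}$.

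Finally I would assemble the pieces. From the preceding step,
\[
G_{\cS'} \;=\; F_{\cS'} + E_{\cS'} \;\leq\; \tfrac{7}{\eps'}F_{\cS} + E_{\cS} \;\leq\; \tfrac{7}{\eps'}\,G_{\cS}.
\]
By hypothesis $A$ is $c$-competitive with $(1+\eps)$ speed for fractional broadcast, so $G_{\cS} \leq c \cdot G^*_{\mathrm{frac}}$, where $G^*_{\mathrm{frac}}$ is the optimal $1$-speed fractional cost on the request sequence. A standard observation closes the argument: any integral broadcast schedule is also a valid fractional broadcast schedule with the same flow time and energy, hence $G^*_{\mathrm{frac}} \leq G^*_{\mathrm{int}}$. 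Chaining the inequalities produces $G_{\cS'} \leq \tfrac{7c}{\eps'}\,G^*_{\mathrm{int}}$, which is the stated competitive ratio for integral broadcast scheduling.

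The only delicate step is the absorption of the additive $\tfrac{5}{\eps'}$ per request into a per-request multiplicative factor, which is what forces the $\tfrac{1}{\eps'}$ blow-up in the ratio and is the reason the conversion is not free. Every other step is routine: the augmentation bookkeeping, the separate handling of energy and flow, and the comparison between fractional and integral optima.
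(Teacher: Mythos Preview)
Your overall approach is correct and is exactly what the paper intends: compose the fractional algorithm with the conversion of Theorem~\ref{thm:red}, bound energy and flow separately, and use that any integral schedule is also a fractional schedule so that $G^*_{\mathrm{frac}} \le G^*_{\mathrm{int}}$. The paper offers no explicit proof of the corollary beyond ``this theorem implies the following corollary,'' so your derivation is the natural one.

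There is, however, a gap in your absorption of the additive $\tfrac{5}{\eps'}$. You argue that ``any fractional schedule completing $J_{p,i}$ must devote $\sigma_p \geq 1$ units of service to page $p$ after $a_{p,i}$, so $f_{p,i} - a_{p,i} \geq 1$.'' This conflates work with time. In the speed-scaling model, $\sigma_p$ units of \emph{work} can be completed in arbitrarily short \emph{time} by running at high speed, so page size alone gives no lower bound on $f_{p,i}-a_{p,i}$. The correct way to dispose of the additive term is already present in the appendix: the constant $5$ counts boundary \emph{time slots} in a discretization that the paper explicitly allows to be ``arbitrarily small.'' Refining the slot length to $\Delta$ turns the per-request bound into $(1+\tfrac{2}{\eps'})(f_{p,i}-a_{p,i})+\tfrac{5\Delta}{\eps'}$; letting $\Delta\to 0$ makes the additive contribution negligible and yields the claimed $\tfrac{7c}{\eps'}$ ratio. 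Replace your flow-time lower bound step with this discretization argument and the proof goes through.
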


Due to the previous corollary, we will focus on fractional broadcast scheduling for the rest of this paper.  It is important to note that if each request is for a unique page, the fractional broadcast setting is equivalent to the standing scheduling setting. This reduction is not needed for the proof of Theorem \ref{thm:non}

\subsection{The Algorithm}

Let $N_a(t)$ contain the unsatisfied requests under our algorithm's schedule at time $t$.  Our algorithm broadcasts at speed $Q(|N_a(t)|)$ at time $t$.   Intuitively, since the flow time of the schedule at time $t$ increases by $|N_a(t)|$, the scheduler can afford to use this much power at time $t$.   Now we define how our algorithm distributes its processing power.  Here the algorithm $\blaps$ is used.  This algorithm was introduced in \cite{EdmondsP09,BansalKN09}.    We will be assuming that our algorithm is given $(1+ 6\eps)$ resource augmentation and this implies the total speed our algorithm uses at time $t$ is $(1+ 6\eps)Q(|N_a(t)|)$ where $\eps \leq \frac{1}{6}$.  The algorithm $\blaps$  takes a parameter $\beta \leq 1$.  We will fix $\beta = \eps$ later.  Let $N'_a(t)$ be the $\beta|N_a(t)|$ requests in $N_a(t)$ with the latest arrival times. At any time $t$, the algorithm equally distributes its processing power amongst the requests in $N'_a(t)$.  That is, for a request $J_{p,i} \in N'_a(t)$ page $p$ is broadcasted at a rate of $x_{p,i}(t) = \frac{(1+6\eps)Q(|N_a(t)|)}{\beta|N_a(t)|}$. Notice that there could be multiple requests for page $p$ in $N'_a(t)$. Let $S_p(t)$ be the set of requests for page $p$ in $N'_a(t)$. A page $p$ is broadcasted at a rate of $\sum_{J_{p,i} \in S_p(t)} x_{p,i}(t)$ at time $t$.  We call $x_{p,i}(t)$ the amount $J_{p,i}$ contributes to how much page $p$ is broadcasted.  Notice that at any time, the algorithm uses total speed $(1+6\eps)Q(|N_a(t))$.

For brevity, throughout this paper will will be using the following simplifying assumption.   We assume that at each time $t$, $\beta|N_a(t)|$ is integral.  A simple extension of the analysis can be made if this is not the case with the following elaboration.   Let $N'_a(t)$ now be the $\ceil{\beta|N_a(t)}$ requests in $N_a(t)$ with latest arrival times. For the $\floor{\beta|N_a(t)|}$ requests with latest arrival times in $N'_a(t)$, $\blaps$ keeps the value of $x$ the same.  Let $J_{p,i}$ be the only other request in $N'_a(t)$. For this request, we set $x_{p,i} =  (\beta|N_a(t)| - \floor{\beta|N_a(t)|})\frac{(1+\eps)Q(|N_a(t)|)}{\beta|N_a(t)|}$.  That is, $J_{p,i}$ is given processing power proportional to $(\beta|N_a(t)| - \floor{\beta|N_a(t)|})$, the amount $J_{p,i}$ `overlaps' the $\beta|N_a(t)|$ latest arriving requests.

\section{Analysis}

We will be using a potential function argument \cite{Edmonds00}.   Let $\ddg(t)$ be the increase in our algorithm's objective at time $t$.  Likewise, let $\ddopt(t)$ be the increase in $\opt$'s objective at time $t$. Notice that $\ddg(t) = 2|N_a(t)|$ because there are $|N_a(t)|$ outstanding requests at time $t$, which increases the flow time of the schedule by $|N_a(t)|$ and the scheduler uses power $|N_a(t)|$ at time $t$.  Let $G = \int^\infty_0 \ddg(t) \dt$ denote our algorithm's total cost and let $G^* = \int^\infty_0 \ddopt(t) \dt$ denote the optimal solution's total cost.  We will define a potential function $\Phi(t)$ that will satisfy the following conditions:

\begin{enumerate}
\item {\bf Boundary Condition}: $\Phi$ is 0 before any request arrives and $0$ after all requests complete
\item {\bf Arrival/Completion Condition}: $\Phi$ does not increase when a request is completed by $\laps$ or $\opt$ or when a request arrives.
\item {\bf Running Condition}:  At all times $t$ it is the case that $\ddg(t) + \ddphi \leq c \ddhopt(t)$ where $c$ is some positive constant.
\end{enumerate}

By integrating the running condition over time, this is sufficient to show that our algorithm is $c$-competitive. This can be seen as follows.  The second equality holds due to $\Phi(0) = \Phi(\infty) = 0$.

$$G = \int^\infty_0 \ddg(t) \dt = \int^\infty_0 (\ddg(t)  + \ddphi(t) )\dt\leq \int^\infty_0 c \ddopt(t)\dt \leq cG^*$$

Now we define our potential function.   We assume without loss of generality that all requests arrive at distinct times, which will simplify the definition of the potential function.  For a request $J_{p,i} \in N_a(t)$ let $\rank(J_{p,i}) = \sum_{J_{q,j} \in N_a(t), a_{q,j} \leq a_{p,i}}1$ at time $t$ be the number of jobs that have arrived during $[0,a_{p,i}]$  that are unsatisfied by the algorithm.  Recall that $x_{p,i}(t)$ is the amount page $p$ is broadcasted by our algorithm at time $t$ due to $J_{p,i}$ and $\sigma_p$ is the amount of page $p$ that must be broadcasted after $a_{p,i}$ to satisfy $J_{p,i}$.  Let $\On(p,i,t_1,t_2) = \int^{t_2}_{t_1} x_{p,i}(t)dt$.   Let $y^*_p(t)$ be the amount of page $p$ that is broadcasted at time $t$ by $\opt$ and let $\Opt(p,t_1,t_2) = \int^{t_2}_{t_1} y_p^*(t)dt$.  We define the variable $z_{p,i}(t)$ for a request $J_{p,i}$ to be $\frac{\On(p,i,t,\infty) \Opt(p,a_{p,i},t)}{\sigma_p}$.  Our potential function is,

$$\Phi(t) := \frac{1}{\eps} \sum_{J_{p,i} \in N_a(t)}   \rank(J_{p,i}) \Big ( \frac{ z_{p,i}}{Q(\rank(J_{p,i}))} \Big)$$

 Our potential function combines the potential functions of \cite{BansalKN09} and \cite{GuptaKP10}. The potential function of \cite{BansalKN09} was used for broadcast scheduling without energy, which needs to somehow have the power function reflected in the potential if it is to work in our setting. To incorporate the power function we use some ideas given in \cite{GuptaKP10}.   

As is usually the case, our potential is designed to approximate the algorithm future cost after subtracting the optimal solution's future cost.    Our algorithm gives higher priority to jobs that have arrived recently.  The potential function is designed to capture the the remaining cost of the algorithm if it satisfies requests in the opposite order of arrival.    Fix a request $J_{p,i}$.  If the optimal solution satisfied request $J_{p,i}$ then the value of $z_{p,i}(t)$ should be thought of as the remaining volume of page $p$ that must be broadcasted to satisfy request $J_{p,i}$.  Assume $J_{p,i}$ has the highest rank in $N_a(t)$. Then $Q(\rank(J_{p,i}))$ is the speed that will be used at time $t$.  The value of $\rank(J_{p,i})$ is the number of requests waiting for request $J_{p,i}$ to be satisfied. Thus,  if requests are satisfied in opposite order of arrival, $\rank(J_{p,i}) \Big ( \frac{ z_{p,i}}{Q(\rank(J_{p,i}))} \Big )$ is an estimate on the flow time that will be accumulated while the algorithm satisfies $J_{p,i}$ because it will take at least $\frac{ z_{p,i}}{Q(\rank(J_{p,i}))}$ time units to satisfy request $J_{p,i}$.

\subsection{Change of the Potential Function}

It is easy to see that the potential function is not affected when the optimal solutions completes a request. Also the potential function does not increase when a request $J_{p,i}$ arrives since $z_{p,i}=0$. Further, the potential is $0$ before all requests arrive and after all requests are completed.  We now show that $\Phi$ does not increase when a request is satisfied by the algorithm.  This lemmas, combined with the previous arguments shows that $\Phi$ satisfies the boundary and arrival/completion conditions.

\begin{lemma}
\label{lem:comp}
$\Phi$ does not increase when the algorithm completes a request.
\end{lemma}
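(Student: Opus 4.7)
The plan is to track the two effects on $\Phi$ at the instant $t = f_{p,i}$ when the algorithm completes $J_{p,i}$. Precisely two things happen: (i) the summand for $J_{p,i}$ is dropped from the sum defining $\Phi$, and (ii) for every request $J_{q,j} \in N_a(t)$ with $a_{q,j} > a_{p,i}$, the value $\rank(J_{q,j})$ decreases by exactly one, while $\rank(J_{q,j})$ is unchanged for $J_{q,j}$ with $a_{q,j} < a_{p,i}$ (since $J_{p,i}$ was never counted in those ranks). I would then argue that effect (i) contributes zero and effect (ii) can only decrease $\Phi$.

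For (i), the key observation is that once $J_{p,i}$ leaves $N_a$ at time $f_{p,i}$, it is absent from every future $N'_a$, so $x_{p,i}(t') = 0$ for all $t' \geq f_{p,i}$. Hence $\On(p,i,t,\infty) = \int_{t}^{f_{p,i}} x_{p,i}(t')\,dt' \to 0$ as $t \to f_{p,i}^{-}$, which forces $z_{p,i}(t) \to 0$ at the moment of completion. The summand $\rank(J_{p,i})\,z_{p,i}/Q(\rank(J_{p,i}))$ therefore vanishes smoothly, so removing $J_{p,i}$ from the sum produces no jump in $\Phi$.

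For (ii), fix $J_{q,j} \in N_a$ with $a_{q,j} > a_{p,i}$ and let $r := \rank(J_{q,j})$ immediately before $f_{p,i}$. Its contribution to $\Phi$ changes from $\tfrac{1}{\eps}\,r\,z_{q,j}/Q(r)$ to $\tfrac{1}{\eps}\,(r-1)\,z_{q,j}/Q(r-1)$; the quantity $z_{q,j}$ itself is unaffected by the removal of $J_{p,i}$, since it depends only on $J_{q,j}$'s own broadcast history and on $\opt$'s broadcasts of page $q$. So it suffices to show that the function $r \mapsto r/Q(r)$ is non-decreasing in $r$, i.e., $(r-1)/Q(r-1) \leq r/Q(r)$. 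Rearranging, this is equivalent to $(r-1)/r \leq Q(r-1)/Q(r)$, which is precisely the first conclusion of Proposition~\ref{prop:one} applied to the concave function $Q$ (recall $Q(0)=0$). Hence every affected summand weakly decreases, and combined with (i) this yields $\Phi(f_{p,i}^{+}) - \Phi(f_{p,i}^{-}) \leq 0$.

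The only delicate point I expect is the continuity claim in (i), namely that $z_{p,i}$ reaches $0$ smoothly rather than jumping at completion; this is handled directly by the definition of $x_{p,i}$ under the algorithm. Every other step is an immediate consequence of Proposition~\ref{prop:one} applied to $Q$ together with the bookkeeping of which ranks change.
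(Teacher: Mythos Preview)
Your proposal is correct and follows essentially the same approach as the paper: decompose the change into the effect on the completed request's own term and the effect on the ranks of later-arriving requests, and handle the latter via Proposition~\ref{prop:one} applied to the concave function $Q$. The paper's proof is terser---it simply writes $\Delta\Phi$ as the sum over $J_{p',j}$ with $a_{p',j}>a_{p,i}$ and reduces directly to $(r-1)/r \le Q(r-1)/Q(r)$---whereas you additionally argue that $z_{p,i}\to 0$ at completion so the dropped summand contributes no jump; this is a nice observation but not strictly needed, since that summand is nonnegative and its removal could only decrease $\Phi$ anyway.
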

\begin{proof}

Consider a time $t$ where that algorithm completes a request $J_{p,i}$.   We can assume that $\rank(J_{p,i}) < |N_a(t)|$; otherwise, trivially there is no increase in $\Phi$.   The change in $\Phi$ is,

$$\Delta \Phi(t)  = \frac{1}{\eps}\sum_{J_{p',j} \in N_a(t), a_{p',j} > a_{p,i}}^{|N_a(t)|}  \Big ( \frac{(\rank(J_{p',j}) - 1)z_i}{Q(\rank(J_{p',j})-1)} - \frac{\rank(J_{p',j}) z_i}{Q(\rank(J_{p',j}))} \Big )  $$

To show that $\Delta \Phi(t) \leq 0$, we need to show that 

\begin{eqnarray*}
\sum_{J_{p',j} \in N_a(t), a_{p',j} > a_{p,i}}^{|N_a(t)|} \frac{(\rank(J_{p',j}) - 1)z_i}{Q(\rank(J_{p',j})-1)}  &\leq& \sum_{J_{p',j} \in N_a(t), a_{p',j} > a_{p,i}}^{|N_a(t)|}  \frac{\rank(J_{p',j}) z_i}{Q(i)}   \Rightarrow \\
\sum_{J_{p',j} \in N_a(t), a_{p',j} > a_{p,i}}^{|N_a(t)|} \frac{(\rank(J_{p',j}) - 1)}{\rank(J_{p',j})}  &\leq& \sum_{J_{p',j} \in N_a(t), a_{p',j} > a_{p,i}}^{|N_a(t)|}  \frac{Q(\rank(J_{p',j})-1)}{Q(\rank(J_{p',j}))} 
\end{eqnarray*}

This is true by definition of $Q$ and Proposition \ref{prop:one}
\end{proof}

Now we concentrate on the running condition, the final property of $\Phi$ that needs to be shown.   Fix a time $t$.  Let $N_o(t)$ be the set of requests unsatisfied by $\opt$ at time $t$.  First lets consider the change in $\Phi(t)$ due to the algorithm's processing.   Recall that the algorithm broadcasts page $p$ at a rate of $\frac{(1+6\eps)Q(|N_a(t)|)}{\beta |N_a(t)|}$ due to a request $J_{p,i} \in N'_a(t)$.  Further, notice that $\Opt(p,a_{p,i}, t) \geq \sigma_p$ for any request $J_{p,i} \in N_a(t) \setminus N_o(t)$, since $\opt$ must broadcast $\sigma_p$ units of page $p$ after $a_{p,i}$ to satisfy page $p$.  Therefore, for any $J_{p,i} \in N'_a(t) \setminus N_o(t)$ we have $\ddt z_{p,i}(t) \geq \ddt \On(p,i,t,\infty) =  -\frac{(1+6\eps)Q(|N_a(t)|)}{\beta |N_a(t)|} $. Notice that the $\rank$ of each request the algorithm  is currently working on is at least $(1-\beta)N_a(t)$.  This is because $N'_a(t)$ consists of the $\beta|N_a(t)|$ requests with latest arrival times.  Using this, we can upperbound the change in $\Phi(t)$ due to the algorithm's processing,

\begin{eqnarray}
\ddphi &\leq& - \frac{1}{\eps}\sum_{J_{p,i} \in N'_a(t) \setminus N_o(t) } \rank(J_{p,i}) \frac{(1+6\eps)Q(|N_a(t)|)}{\beta N_a(t) Q(|N_a(t)|)}  \leq  -\sum_{J_{p,i} \in N'_a(t) \setminus N_o(t) } \frac{(1+6\eps)(1-\beta)}{\eps\beta}  \label{eqn:algproc}
\end{eqnarray}

Next, we consider the change in $\Phi$ due to the adversary's processing.  Let $p^*$ be the page which the adversary broadcasts.  Let $s_o(t)$ be the speed the optimal solution processes page $p^*$ at.  Let $P^*(t) = P(s_o(t))$ be the power $\opt$ uses at time $t$.    The adversary can affect $z_{p^*,i}$ for any request $J_{p^*,i} \in N_a(t)$.  We first observe the following,

$$\ddt \sum_{ J_{p^*,i} \in N_a(t)} z_{p^*,i}(t)  \leq \sum_{ J_{p^*,i} \in N_a(t)} \frac{\On(p^*,i,t,\infty)}{\sigma_{p^*}} \cdot \Big(\ddt \Opt(p^*,a_{p^*,i},t)  \Big) \leq \ddt \Opt(p^*,a_{p^*,i},t) = s_o(t) \dt $$

The last inequality holds since $\sum_{i, J_{p^*,i} \in N_a(t)} \On(p^*,i,\infty) \leq \sigma_{p^*}$.  This is because the algorithm needs to only broadcast page $p^*$ for a total of $\sigma_{p^*}$ amount of time to satisfy all outstanding requests for page $p^*$. Let $J_{p^*,k}$ be the request in $N_a(t)$ such that $\frac{\rank(J_{p^*,k})}{Q(\rank(J_{p^*,k}))}$ is maximized.  We can upper bound the increase in $\Phi$ due to $\opt$' processing as,

$$\ddphi = \frac{1}{\eps} \Big ( \frac{\rank(J_{p^*,k})}{Q(\rank(J_{p^*,k}))} \Big) \ddt \sum_{ J_{p^*,i} \in N_a(t)}z_{p^*,i}(t) \leq  \frac{\rank(J_{p^*,k})s_o(t)}{\eps Q(\rank(J_{p^*,k}))}$$

Our goal is to show that $\ddg(t) + \ddphi(t) \leq \frac{2}{\eps^2} \ddopt(t)$.  First we consider the case when the adversary uses power at least $|N_a(t)|$.  In this case, the increase in the algorithm's objective can be charged directly to the optimal solution, along with any increase in the potential function.

\begin{lemma}
\label{lem:optpower}
If $Q(|N_a(t)|) \leq  s_o(t)$ (equivalently, $P^*(t) \geq |N_a(t)|$) then $\ddg(t) + \ddphi(t) \leq \frac{2}{\eps} \ddhopt(t)$.
\end{lemma}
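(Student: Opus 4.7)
The plan is to separately bound $\ddg(t)$ and the two contributions to $\ddphi(t)$ (algorithm processing and adversary processing), and then combine. For $\ddg(t)$, by the opening discussion $\ddg(t) = 2|N_a(t)|$; since the hypothesis says $P^*(t) \geq |N_a(t)|$ and $\ddopt(t) \geq |N_o(t)| + P^*(t) \geq P^*(t)$, this immediately gives $\ddg(t) \leq 2 P^*(t) \leq 2\ddopt(t)$.

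For the potential, the algorithm's processing contribution is non-positive by Eq.~(\ref{eqn:algproc}), so I will drop it. The adversary's contribution was already bounded as $\ddphi \leq \frac{\rank(J_{p^*,k})\, s_o(t)}{\eps\, Q(\rank(J_{p^*,k}))}$ just before the statement of the lemma. The job is to compare $\rank(J_{p^*,k})/Q(\rank(J_{p^*,k}))$ with something involving $|N_a(t)|$ and then relate $|N_a(t)|$ and $s_o(t)$ to the adversary's power $P^*(t)$. Since $\rank(J_{p^*,k}) \leq |N_a(t)|$ and $Q$ is concave with $Q(0)=0$, Proposition~\ref{prop:one} (second bullet, applied with $x = |N_a(t)|$ and $\alpha = \rank(J_{p^*,k})/|N_a(t)|$) yields $Q(\rank(J_{p^*,k}))/\rank(J_{p^*,k}) \geq Q(|N_a(t)|)/|N_a(t)|$, i.e., $x/Q(x)$ is non-decreasing. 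Thus
\[
\ddphi(t) \;\leq\; \frac{|N_a(t)|\, s_o(t)}{\eps\, Q(|N_a(t)|)}.
\]

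Now apply concavity of $Q$ a second time. The hypothesis $Q(|N_a(t)|) \leq s_o(t)$ is equivalent to $|N_a(t)| \leq P(s_o(t)) = P^*(t)$, so by the same ``$x/Q(x)$ is non-decreasing'' argument applied with $x = P^*(t)$ and $\alpha = |N_a(t)|/P^*(t)$,
\[
\frac{Q(|N_a(t)|)}{|N_a(t)|} \;\geq\; \frac{Q(P^*(t))}{P^*(t)} \;=\; \frac{s_o(t)}{P^*(t)},
\]
which rearranges to $|N_a(t)|\, s_o(t)/Q(|N_a(t)|) \leq P^*(t)$. Plugging back in gives $\ddphi(t) \leq P^*(t)/\eps$.

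Finally, I combine: $\ddg(t) + \ddphi(t) \leq 2 P^*(t) + P^*(t)/\eps$. Since $\eps \leq 1/6 \leq 1/2$, we have $2 + 1/\eps \leq 2/\eps$, so $\ddg(t) + \ddphi(t) \leq (2/\eps) P^*(t) \leq (2/\eps) \ddopt(t)$, which is the desired inequality. The only delicate step is the double use of Proposition~\ref{prop:one} — once to move the ratio $\rank/Q(\rank)$ up to $|N_a(t)|/Q(|N_a(t)|)$, and once to swap $s_o(t)/Q(|N_a(t)|)$ for $P^*(t)/|N_a(t)|$ — both times exploiting that $Q$ is concave through the origin, so $x/Q(x)$ is monotone in $x$.
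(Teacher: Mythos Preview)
Your proof is correct and follows essentially the same line as the paper. Both arguments drop the (non-positive) algorithm-processing contribution, bound the adversary-processing term by $\frac{|N_a(t)|\,s_o(t)}{\eps\,Q(|N_a(t)|)}$ via concavity of $Q$, and then show this quantity is at most $P^*(t)$; the paper phrases this last step using convexity of $P$ (with the auxiliary $\gamma=s_o(t)/Q(|N_a(t)|)$), while you phrase it as a second application of the concavity of $Q=P^{-1}$, but these are the same fact and lead to the identical inequality.
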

\begin{proof}
First we bound the increase in $\Phi(t)$ due to $\opt$'s processing. Intuitively, the increase in $\opt$'s objective, due to using a lot of power, is large enough to absorb the increase in $\Phi$ and  the increase in algorithm's  objective.  By increasing $\Phi$ now and charging it to $\opt$, we can later use the decrease in $\Phi$ to pay for increases in the algorithm's objective. Recall that $J_{p^*,k}$ is request in $N_a(t)$ for page $p^*$ that maximizes $\frac{\rank(J_{p^*,k})}{Q(\rank(J_{p^*,k}))}$.  Let $\alpha |N_a(t)| = \rank(J_{p^*,k})$ and let $\gamma  = \frac{s_o(t)}{Q(|N_a(t)|)}$.    Notice that $\alpha \leq 1$ and $\gamma \geq 1$.    The function $Q$ is concave.  Therefore, $Q(\rank(J_{p^*,k})) \geq \alpha Q(|N_a(t)|)$ by Proposition \ref{prop:one}.  The increase in $\Phi(t)$ due to $\opt$'s processing can be bounded as follows.
   
\begin{eqnarray*}
\Big(\frac{\rank(J_{p^*,k})}{\eps} \Big ) \frac{s_o(t)}{Q(\rank(J_{p^*,k}))} \leq \frac{1}{\eps}\alpha |N_a(t)| \gamma \frac{Q(|N_a(t))|}{Q(\rank(J_{p^*,k}))} \leq \frac{1}{\eps}\alpha |N_a(t)| \gamma \frac{Q(|N_a(t))|}{\alpha Q(|N_a(t)|)}  \leq  \frac{\gamma}{\eps} |N_a(t)| 
\end{eqnarray*}

The total power used by $\opt$ is at least $P^* \geq \gamma |N_a(t)|$ since $P$ is convex and the speed $\opt$ uses is $\gamma Q(|N_a(t)|)$. Hence, $\ddhopt(t) \geq \gamma|N_a(t)|$. Recall that $\ddg(t)= 2|N_a(t)|$.  Knowing that $\eps \leq \frac{1}{6}$ we have that,

\begin{eqnarray*}
\ddg(t) + \ddphi(t)  \leq 2 |N_a(t)| + \frac{\gamma}{\eps} |N_a(t)| \leq \frac{2 \gamma}{\eps} |N_a(t)| \leq \frac{2}{\eps}\ddopt(t)
\end{eqnarray*}

\end{proof}

Due to the previous lemma, for the rest of this paper we can concentrate on the case where $Q(|N_a(t)|) >  s_o(t)$.  We begin by bounding the increase in $\Phi(t)$ due to $\opt$'s processing using this assumption,

\begin{eqnarray}
\Big ( \frac{\rank(J_{p^*,k})}{\eps} \Big ) \frac{s_o(t)}{Q(\rank(J_{p^*,k}))} &\leq& \Big ( \frac{\rank(J_{p^*,k})}{\eps} \Big ) \frac{Q(|N_a(t)|)}{Q(\rank(J_{p^*,k}))} \nonumber \\
 &\leq&   \Big ( \frac{\rank(J_{p^*,k})}{\eps} \Big ) \frac{(|N_a(t)|/\rank(J_{p^*,k}))Q(\rank(J_{p^*,k}))}{Q(\rank(J_{p^*,k}))} \nonumber \\
 &\leq& \frac{1}{\eps} |N_a(t)| \label{eqn:optproc}
\end{eqnarray}

The first inequality holds since we assumed that $Q(|N_a(t)|) >  s_o(t)$.  The second inequality follows from definition of $Q$ and Proposition \ref{prop:one}. We can now prove the final case of the running condition.

\begin{lemma}\label{lem:main}
If $s_o(t) < Q(|N_a(t)|) $ and $\beta = \eps$ then $\ddg(t) + \ddphi(t) \leq \frac{2}{\eps^2} \ddhopt(t)$.
\end{lemma}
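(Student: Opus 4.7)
The plan is to combine the two bounds on $\ddphi$ already derived --- the negative contribution from the algorithm's processing \eqnref{algproc} and the positive contribution from $\opt$'s processing \eqnref{optproc} --- with $\ddg(t) = 2|N_a(t)|$. Writing $n_a = |N_a(t)|$, $n_o = |N_o(t)|$, $m = |N'_a(t) \setminus N_o(t)|$, and $C = (1+6\eps)(1-\eps) = 1 + 5\eps - 6\eps^2$, and recalling $|N'_a(t)| = \beta n_a = \eps n_a$, this produces
\[
\ddg(t) + \ddphi(t) \;\leq\; 2n_a + \frac{n_a}{\eps} - \frac{C\,m}{\eps^2}.
\]
Since $|N'_a(t) \cap N_o(t)| \leq n_o$, we also have the simple but crucial inequality $m \geq \max\{0,\; \eps n_a - n_o\}$.

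I would then split on whether $\opt$ has many or few outstanding requests, using the threshold $\eps n_a$. If $n_o \geq \eps n_a$, I would discard the (non-positive) algorithm term and use $n_a \leq n_o/\eps \leq \ddhopt(t)/\eps$ to get $\ddg + \ddphi \leq (2 + 1/\eps)\,n_a \leq (2\eps+1)\ddhopt/\eps^2 \leq 2\ddhopt/\eps^2$, where the last step uses $\eps \leq 1/6$ so that $2\eps+1 \leq 4/3$. In the complementary case $n_o < \eps n_a$, the algorithm's decrease must carry the load; substituting the bound $m \geq \eps n_a - n_o$ rearranges the inequality to
\[
\ddg(t) + \ddphi(t) \;\leq\; n_a\!\left[\,2 + \frac{1 - C}{\eps}\,\right] + \frac{C\, n_o}{\eps^2}.
\]

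The key algebraic step --- and the main obstacle --- is verifying that the bracketed coefficient of $n_a$ is non-positive, which is where the $(1+6\eps)$ resource-augmentation factor earns its keep. Since $1 - C = -\eps(5 - 6\eps)$, the bracket simplifies to $6\eps - 3$, which for $\eps \leq 1/6$ is at most $-2$; so the whole $n_a$ term drops and one is left with $\ddg + \ddphi \leq C n_o/\eps^2 \leq (5/3)\,n_o/\eps^2 \leq 2\ddhopt/\eps^2$, using $C \leq 5/3$ for $\eps \leq 1/6$ together with $\ddhopt \geq n_o$. Intuitively, the augmentation makes $C - 1 = \Omega(\eps)$, so $(1-C)/\eps$ is a negative constant large enough in magnitude to swamp both the $2n_a$ contribution to $\ddg$ and the $n_a/\eps$ bound on $\opt$'s contribution to $\Phi$; without the augmentation one would have $C \leq 1 - \eps$ and this cancellation would fail.
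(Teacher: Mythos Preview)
Your proof is correct and follows essentially the same approach as the paper: combine \eqnref{algproc} and \eqnref{optproc} with $\ddg(t)=2|N_a(t)|$, use $|N'_a(t)\setminus N_o(t)| \geq \eps n_a - n_o$, and observe that the resulting coefficient of $n_a$ is $6\eps - 3 \leq 0$. The only difference is that your case split on $n_o \gtrless \eps n_a$ is unnecessary: the inequality $m \geq \eps n_a - n_o$ holds in both cases (trivially when the right side is negative), so the paper runs your Case~2 computation once and it covers everything, arriving directly at $(6\eps-3)n_a + C n_o/\eps^2 \leq 2n_o/\eps^2$.
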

\begin{proof}
We know that $\ddg(t) = 2|N_a(t)|$.  The increase in $\Phi(t)$ due to $\opt$'s processing is at most $\frac{1}{\eps}|N_a(t)|$ and the change in $\Phi(t)$ due to the algorithm's processing is at most $-\sum_{J_{p,i} \in N'_a(t) \setminus N_o(t) } \frac{(1+6\eps)(1-\beta)}{\eps\beta}$.  Combining these, we have

\begin{eqnarray*}
\ddg(t) + \ddphi(t) &\leq& 2|N_a(t)| + \frac{1}{\eps}|N_a(t)|-\sum_{J_{p,i} \in N'_a(t) \setminus N_o(t) } \frac{(1+6\eps)(1-\beta)}{\eps\beta} \\
&\leq&  2|N_a(t)| + \frac{1}{\eps}|N_a(t)|- \Big (\sum_{J_{p,i} \in N'_a(t)  } \frac{(1+6\eps)(1-\beta)}{\eps\beta} - \sum_{J_{p,i} \in N_o(t)  } \frac{(1+6\eps)(1-\beta)}{\eps\beta} \Big ) \\
&\leq& 2|N_a(t)| + \frac{1}{\eps}|N_a(t)| -  \Big ( \frac{(1+6\eps)(1-\beta)}{\eps} |N_a(t)| -  \frac{(1+6\eps)(1-\beta)}{\eps\beta} |N_o(t)| \Big ) \\
&\leq& \frac{2}{\eps^2} |N_o(t)| \leq  \frac{2}{\eps^2} \ddhopt(t)
\end{eqnarray*}

The second to last inequality follows from $\beta = \eps \leq \frac{1}{6}$.  The last inequality follows since the optimal solution's flow time increases by $|N_o(t)|$ at time $t$.
\end{proof}

Combing Lemmas (\ref{lem:optpower}) and (\ref{lem:main}) along with setting $\beta$ to be $\eps$ gives the running condition, namely $\ddg(t) + \ddphi(t) \leq \frac{2}{\eps^2} \ddhopt(t)$. Thus, we have shown all the properties of $\Phi$, which gives the following theorem.

\begin{theorem}
The algorithm with $\beta = \eps \leq \frac{1}{6}$ is $(1+6\eps)$-speed $\frac{2}{\eps^2}$-competitive for flow plus energy in fractional broadcast scheduling.
\end{theorem}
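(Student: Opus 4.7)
The plan is to establish the standard amortized local competitiveness inequality $\ddg(t)+\ddphi(t)\leq \frac{2}{\eps^2}\,\ddopt(t)$ and integrate it over time. Concretely, I would first verify that $\Phi$ satisfies the boundary condition: before any request arrives no request is in $N_a(t)$, so $\Phi=0$; after all requests are satisfied by both the algorithm and $\opt$, $N_a(t)$ is again empty so $\Phi=0$. Then I would note the arrival/completion conditions: when $J_{p,i}$ arrives, the contribution from $J_{p,i}$ is zero because $\On(p,i,a_{p,i},\infty)$ starts at zero and so $z_{p,i}(a_{p,i})=0$; and $\opt$'s completions do not appear in $\Phi$ at all (since $\Phi$ is defined over $N_a(t)$). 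The only nontrivial part of the arrival/completion condition is that $\Phi$ doesn't jump up when the algorithm completes a request, which is exactly \lemref{comp}.

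Next I would verify the running condition by separating the continuous changes in $\Phi$ into contributions from the algorithm's transmissions and from $\opt$'s transmissions. The algorithm-side bound is exactly \Eqref{algproc}, derived from the fact that any request currently worked on has rank at least $(1-\beta)|N_a(t)|$ and $\opt$ must still broadcast $\sigma_p$ units for any request it has not yet satisfied. The $\opt$-side bound splits based on whether $\opt$ is currently using at least as much power as the algorithm's instantaneous cost rate. When $Q(|N_a(t)|)\leq s_o(t)$, \lemref{optpower} directly charges the entire increment to $\opt$'s power consumption. When $s_o(t) < Q(|N_a(t)|)$, the bound \Eqref{optproc} controls the $\opt$-side increase by $\frac{1}{\eps}|N_a(t)|$, and then \lemref{main} combines this with the algorithm-side decrease from \Eqref{algproc} and the fact that $|N'_a(t)\setminus N_o(t)| \geq |N'_a(t)|-|N_o(t)|$ to cancel the $|N_a(t)|$ terms against $\opt$'s flow accumulation of $|N_o(t)|$.

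With both cases established, setting $\beta=\eps\leq \tfrac{1}{6}$ gives the uniform bound $\ddg(t)+\ddphi(t)\leq \frac{2}{\eps^2}\ddopt(t)$ at every $t$. Integrating and invoking the boundary condition $\Phi(0)=\Phi(\infty)=0$, together with the non-positive jumps guaranteed by the arrival/completion condition, yields
\[
G \;=\; \int_0^\infty \ddg(t)\,\dt \;\leq\; \int_0^\infty \big(\ddg(t)+\ddphi(t)\big)\,\dt \;\leq\; \frac{2}{\eps^2}\int_0^\infty \ddopt(t)\,\dt \;=\; \frac{2}{\eps^2}\,G^\ast,
\]
which is exactly the $(1+6\eps)$-speed $\tfrac{2}{\eps^2}$-competitive guarantee.

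The main obstacle, which has already been absorbed into the lemmas above, is the case analysis for the $\opt$-side change in $\Phi$: one has to argue carefully that the ``bad'' case where $\opt$ is running very fast is paid for by $\opt$'s own large power cost (using convexity of $P$ and concavity of $Q$ via \propref{one}), while the ``typical'' case where $\opt$ is slower than the algorithm is absorbed by the negative drift from the algorithm's $\blaps$-style equal-sharing among the most recent $\beta|N_a(t)|$ requests, using the rank lower bound $(1-\beta)|N_a(t)|$. Once those two regimes are separately handled and $\beta=\eps$ is tuned, the final inequality follows by direct substitution.
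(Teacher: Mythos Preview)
Your proposal follows the paper's argument essentially verbatim: verify the boundary and arrival/completion conditions for $\Phi$, then establish the running condition by the two-case split on whether $s_o(t)\geq Q(|N_a(t)|)$ (handled by Lemma~\ref{lem:optpower}) or $s_o(t)<Q(|N_a(t)|)$ (handled by Lemma~\ref{lem:main} via the bounds in (\ref{eqn:algproc}) and (\ref{eqn:optproc})), and finally integrate using $\Phi(0)=\Phi(\infty)=0$ and the non-positivity of the jumps. This is exactly how the paper proves the theorem.

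One small slip worth fixing: your justification that $z_{p,i}(a_{p,i})=0$ at arrival is backwards. You write that $\On(p,i,a_{p,i},\infty)$ ``starts at zero,'' but $\On(p,i,t,\infty)=\int_t^\infty x_{p,i}(t')\,\mathrm{d}t'$ is the \emph{remaining} contribution of $J_{p,i}$ and is maximal, not zero, at $t=a_{p,i}$. The factor that vanishes at arrival is $\Opt(p,a_{p,i},a_{p,i})=\int_{a_{p,i}}^{a_{p,i}} y_p^*(t)\,\mathrm{d}t=0$, which is why $z_{p,i}(a_{p,i})=0$. The conclusion you need (no increase in $\Phi$ at arrivals) is unaffected.
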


By using  the reduction from integral to fractional broadcast scheduling in Corollary (\ref{cor}) and scaling $\eps$, we have proven Theorem \ref{thm:main}.

\section{Conclusion}

In this paper we initiated the study of energy in the broadcast scheduling model.  We showed a scalable algorithm for average flow time plus energy.  It is important to note that the algorithm $\blaps$ explicitly depends on the speed $\eps$. That is, $\beta$ depends on the speed $\eps$. Recently,  many algorithms developed for scheduling have this dependency \cite{ChekuriIM09,EdmondsP09,GuptaIKMP10}.  It would be interesting to find an algorithm which does not depend on $\eps$ or show no such algorithm exists.  It would also be of interest to consider objective functions that include energy minimization for the broadcast setting.

%-------------------------------------------
\bibliographystyle{plain}
\bibliography{Energybroad}
%-------------------------------------------

\appendix

\section{Reduction of Integral to Fractional Broadcast Scheduling}\label{sec:red}

In this section we prove Theorem \ref{thm:red}. Consider any sequence of requests and let $\cS$ denote a valid fractional broadcast schedule.   We now define an algorithm to construct an integral schedule $\cS'$.  In the integral schedule we will use $0 <\eps \leq 1$ resource augmentation over the schedule $\cS$.  Recall that in an integral schedule, a request $J_{p,i}$ must receive $\sigma_p$ unit sized pieces of page $p$ in \emph{sequential} order.  We will assume that the fractional schedule works on at most one page during a unit time slot.  We can assume this without loss of generality because we can set a unit time slot to be arbitrarily small, since we are assuming preemption. Let $f_{p,i}$ denote the time request $J_{p,i}$ is fractionally satisfied in $\cS$.  Let $f^I_{p,i}$ denote the time that $J_{p,i}$ is integrally satisfied in $\cS'$. Our algorithm and analysis build on the reduction from fractional to integral broadcast scheduling given in  \cite{BansalKN09} where broadcasts are always sent at a fixed speed. Since the processor speeds can vary in our setting, in our analysis we will have to be careful about how speed is distributed and how processing power is accounted for.

Our algorithm will keep track of a queue $\cQ$ of tuples.  A tuple will be of the form $\langle p, w, b, k \rangle$.   Here $p$ corresponds to a page. The value of $k \in \mathbb{R}$ will correspond to the part of page $p$ that will be broadcasted.  The variable $w \in \mathbb{R}$ will be called the \emph{width} and $b \in \mathbb{R}$ will be the \emph{start time}.     Each tuple $\tau = \langle  p, w, b, k \rangle$ will correspond to some request $J_{p,i}$ and the width will be $w= f_{p,i} - a_{p,i} $.  This request will determine the part of page $p$ that will be broadcasted and the speed of the broadcast.  The request corresponding to a tuple could be updated.  However, after the first piece of the page is broadcasted, the request corresponding to the tuple will not be changed. 

At any time when not broadcasting a page, our algorithm will determine a tuple $\tau= \langle  p, w, b, k \rangle$ to broadcast.  Let $J_{p,i}$ correspond to this tuple.  Let $s(p,i,k)$ denote the speed used by  $\cS$ in the $k$th time slot devoted to page $p$ after time $a_{p,i}$.  Notice that a $s(p,i,k)$ volume of page $p$ is broadcasted in this time slot under $\cS$.   When broadcasting the tuple $\tau$, we mean that our algorithm is broadcasting the page $p$ with speed $(1+\eps)s(p,i,k)$.  Our algorithm stops broadcasting $p$ once a $s(p,i,k)$ volume of work is completed.  Notice that this takes $\frac{1}{1+\eps}$ time because our algorithm runs at speed $(1+\eps)s(p,i,k)$ and $\cS$ took one time unit to broadcast the $s(p,i,k)$ volume of page $p$.  If $k= 1$ then the algorithm broadcasts $p$ from the beginning of the page.  Otherwise, our algorithm broadcasts $p$ from where it left off.    Besides scaling the speeds, the algorithm is the same as that used in  \cite{BansalKN09}.

\begin{center}
\begin{tabular}[r]{|c|}
\hline
\textbf{Algorithm}: GenRounding(t) \\

\begin{minipage}{14cm}
\begin{algorithmic}
\STATE All requests arrive unmarked
\STATE Simulate the schedule $\cS$
\FOR{Any unmarked request $J_{p,i}$  completed by $\cS$ at time $t$}
\IF{There is a tuple $\tau=\langle p, w, b, k \rangle \in \cQ$ for page $p$ where $b \geq a_{p,i}$}
\STATE Update the width of $\tau$ to be $w = \min\{w, f_{p,i} - a_{p,i} \}$
\ELSE
\STATE Insert the tuple $\langle p, (f_{p,i} - a_{p,i}), \infty, 1 \rangle$ into $\cQ$
\ENDIF
\ENDFOR
\STATE Dequeue the tuple $\tau=\langle p, w, b, k \rangle $ with minimum width, breaking ties arbitrarily
\STATE Let $J_{q,j}$ be the request which gave $\tau$ the width $w$
\STATE Broadcast the tuple  $\tau$ 
\IF{This broadcast was the first unit piece of page $p$. That is, $k=1$}
\STATE Set $b = t$
\ENDIF
\IF{This broadcast was the last unit piece of page $p$. That is, $\sum_{i=1}^k s(q,j,k) = \sigma_p$}
\STATE Mark all requests for page $p$ that arrived before time $s$
\ELSE
\STATE Update $\tau$ to be $\langle p, w, b, k+1 \rangle $ 
\ENDIF
\end{algorithmic}
\end{minipage}\\\\

\hline
\end{tabular}
\end{center}

It can be observed that our algorithm is online if $\cS$ is online. First we show that the power used by our algorithm is most the power used by $\cS$.

\begin{lemma}\label{lem:overlap}
Consider any two tuples $\tau_j$ and $\tau_k$ for the same page  $q$ broadcast by $\cS'$. Let $J_{q,j}$ and $J_{q,k}$ be the requests which gave these tuples their width and speed, respectively.  Let $j>k$ then $a_{q,j} \geq f_{q,k}$    
\end{lemma}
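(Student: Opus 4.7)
The plan is a proof by contradiction: assume $a_{q,j} < f_{q,k}$ and derive an inconsistency. The first step is to pin down $\tau_k$'s first-broadcast time $b_k$. Because the paragraph preceding the pseudocode states that a tuple's associated request is fixed once the first piece of its page is broadcast, and $\tau_k$'s current associated request is $J_{q,k}$, the update that linked $J_{q,k}$ to $\tau_k$ must have taken place at time $f_{q,k}$ while $\tau_k$ was still non-started. Hence $b_k \geq f_{q,k}$.

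Next I will set up a structural invariant on the tuples for page $q$. Enumerate them in order of creation as $\tau^{(1)}, \tau^{(2)}, \ldots$ with first-broadcast times $B_1, B_2, \ldots$. The creation of $\tau^{(m)}$ requires every earlier tuple to have finite $b$ strictly less than the arrival time of the triggering request, which yields both $B_1 < B_2 < \cdots$ and the fact that at most one non-started tuple for $q$ resides in $\cQ$ at any moment. Building on this I will show that the request giving width to $\tau^{(m)}$ must have arrival time in the interval $(B_{m-1}, B_m]$, with $B_0 := -\infty$: the upper bound is needed for $\tau^{(m)}$ to still satisfy $b \geq a$ at the moment of the update, and the lower bound holds because any request with $a \leq B_{m-1}$ is already served by $\tau^{(m-1)}$'s broadcast and so either updates $\tau^{(m-1)}$ (which, being already started, cannot change its associated request) or is marked as soon as $\tau^{(m-1)}$ finishes, never touching $\tau^{(m)}$.

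Writing $\tau_k = \tau^{(m)}$ and $\tau_j = \tau^{(m')}$, the hypothesis $a_{q,j} > a_{q,k}$ (from $j>k$) combined with the interval characterization rules out $m' < m$, and $m' = m$ is excluded by $\tau_j \neq \tau_k$, so $m' > m$. Then $a_{q,j} > B_{m'-1} \geq B_m = b_k \geq f_{q,k}$, contradicting the assumption. The main obstacle is the interval characterization itself, since the pseudocode is ambiguous when more than one tuple for page $p$ satisfies $b \geq a_{p,i}$ simultaneously. The reading consistent with the stated request-stability property is that the earliest (smallest-$b$) matching tuple whose batch already covers $J_{p,i}$ absorbs the update, rather than the non-started successor tuple; the alternative reading would let a non-started tuple's associated request be overwritten by a request already served by an earlier broadcast, violating the stability remark. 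Once this reading is made explicit, the remaining bookkeeping is direct.
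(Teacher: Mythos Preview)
Your argument is correct and lands on the same chain of inequalities as the paper, namely $f_{q,k}\le b(\tau_k)<a_{q,j}$, but you build considerably more scaffolding to reach it. The paper's proof is three sentences: since $J_{q,k}$ gave $\tau_k$ its width at time $f_{q,k}$ and the associated request is frozen once broadcasting starts, $\tau_k$ was still unstarted at $f_{q,k}$, hence $b(\tau_k)\ge f_{q,k}$; and since a \emph{new} tuple (namely $\tau_j$) was inserted for page $q$, the insertion test forces every existing tuple for $q$---in particular $\tau_k$---to have $b<a_{q,j}$. Your creation-order enumeration, the monotonicity $B_1<B_2<\cdots$, and the interval characterization $(B_{m-1},B_m]$ are a careful unpacking of that second step, and your explicit discussion of which tuple absorbs an update when several satisfy $b\ge a_{p,i}$ makes precise an assumption the paper leaves implicit. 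So the route is the same, just walked more carefully; the contradiction wrapper is not needed since the chain gives the inequality directly.
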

\begin{proof}
The algorithm $\cS'$ put the tuple $\tau_k$ into $\cQ$ at time $f_{q,k}$.  Hence, the start time $b(\tau_k)$ of $\tau_k$ must be after $f_{q,k}$.   Since the algorithm inserted a new tuple for $J_{q,j}$, it must be that $ b(\tau_k) < a_{q,j}$. Hence,   $f_{q,k} \leq b(\tau_k) < a_{q,j}$.
\end{proof}

Consider any unit time slot in $\cS$.  The previous lemma implies that only one tuple broadcasted in $\cS'$ will correspond to this time slot.  This implies the following claim.

\begin{claim}\label{clm:power}
The power used by $\cS'$ is at most the power used by $\cS$.
\end{claim}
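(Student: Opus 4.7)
The plan is to exhibit an injection from the broadcasts performed by $\cS'$ to the unit time slots of $\cS$ such that each broadcast's energy contribution in $\cS'$ is bounded by the energy contribution of its image slot in $\cS$. Summing the per-broadcast inequality along the injection will then yield the total-energy claim.

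I will map a broadcast of a tuple $\tau = \langle p,w,b,k\rangle$ with current source request $J_{p,i}$ to the $k$-th unit slot in which $\cS$ transmits page $p$ after time $a_{p,i}$. The broadcast runs at speed $(1+\eps)s(p,i,k)$; under the resource augmentation model its instantaneous power is $P(s(p,i,k))$, and its duration is $\frac{1}{1+\eps}$, since this is the time required to deliver the $s(p,i,k)$ units of volume at the augmented speed. Hence the broadcast's energy contribution is $\frac{P(s(p,i,k))}{1+\eps} \le P(s(p,i,k))$, which is exactly what $\cS$ spends on the image slot of length one at speed $s(p,i,k)$.

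To establish injectivity, observe that any two broadcasts of $\cS'$ mapped to a common slot of $\cS$ must concern the same page $p$, since by assumption each unit slot of $\cS$ is devoted to a single page. If the two broadcasts lie in different tuples, Lemma~\ref{lem:overlap} applies: writing $J_{p,j}$ for the source request of the later-broadcast tuple and $J_{p,j'}$ for that of the earlier tuple, it gives $a_{p,j} \ge f_{p,j'}$, so the slots of $\cS$ devoted to $p$ after $a_{p,j}$ lie strictly after the fractional completion of $J_{p,j'}$ and are therefore disjoint from the slots used by the earlier tuple. If the two broadcasts lie in a single tuple, their $k$-values strictly differ because $k$ is incremented after every broadcast within a tuple's lifetime, and the gating condition $b \ge a_{p,i}$ on width updates can be used to show that any source-request change only shifts the image slot monotonically forward, keeping the two image slots distinct.

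With the injection in hand, summing gives
$$\sum_{\text{broadcasts of }\cS'} \frac{P(s(p,i,k))}{1+\eps} \;\le\; \sum_{\text{slots of }\cS} P(\text{slot speed}),$$
and the left-hand side is exactly the total energy of $\cS'$ while the right-hand side is the total energy of $\cS$. The main obstacle I expect is the intra-tuple sub-case of the injectivity argument: because a width update can fire after the first broadcast of a tuple, the attached source request may change between successive broadcasts of the same tuple, and one must rule out that a shift in ``$a_{p,i}$'' causes the $k$-th image slot to coincide with the image slot of an earlier broadcast of the same tuple. I expect this to reduce to a monotonicity check built directly into the algorithm's update rules rather than requiring any deeper structural lemma.
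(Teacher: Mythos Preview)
Your approach is correct and essentially matches the paper's (very terse) proof, which likewise reduces the claim to the observation that each unit time slot of $\cS$ is hit by at most one broadcast of $\cS'$, via Lemma~\ref{lem:overlap}. Your anticipated intra-tuple obstacle does not actually arise: the paper stipulates, in the prose introducing the algorithm, that ``after the first piece of the page is broadcasted, the request corresponding to the tuple will not be changed,'' so within a single tuple the source request is fixed and distinct $k$-values trivially map to distinct slots of $\cS$.
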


It remains to show that $f^I_{p,i} - a_{p,i} \leq \frac{2}{\eps}(f_{p,i} - a_{p,i})+ \frac{5}{\eps}$ for any request $J_{p,i}$. Fix a request $J_{p,i}$.  If at time $f_{p,i}$ the request $J_{p,i}$ is marked then $f^I_{p,i} - a_{p,i} \leq  \frac{2}{\eps}(f_{p,i} - a_{p,i})+ \frac{5}{\eps}$ and we are done. Thus, we assume that $J_{p,i}$ is unmarked at time $f_{p,i}$.  Since, $J_{p,i}$ is unmarked, there exists a tuple $\tau = \langle p, w, b, k \rangle$ in $\cQ$ at time $t$ where $w \leq f_{p,i}  - a_{p,i}$.  Let $t_A$ be the earliest time before time $t$ where every tuple  dequeued during $[t_A,f_{p,i}]$ has width less than $w$.  Let $t_B$ be the latest time after time $t$ where every tuple dequeued during $[f_{p,i}, t_B]$ has width less than $w$. By definition of the algorithm, at each time during $[t_A,t_B]$ only tuples with width at most  $w$ are dequeued.  Let $\prod$ denote the set of tuples broadcasted during $[t_A,t_B]$.  Notice by definition of the algorithm,  once a tuple is broadcasted, the width of the tuple never changes.

\begin{claim}\label{clm1}
Consider any $\tau \in \prod$ and let $J_{q,j}$ be the request that gave $\tau$ its width.  Then $a_{q,j} \geq t_A - w$.
\end{claim}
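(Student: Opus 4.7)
The key insight is that once $\tau$'s width has been pulled down to $w(\tau) < w$ at time $f_{q,j}$, the min-width dequeue rule of GenRounding prevents any subsequent dequeue of width $\geq w$ while $\tau$ is still in the system. This forces $t_A \leq f_{q,j}$, from which the claim is a one-line calculation. I would not use contradiction; the argument is direct.

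First I would observe that since $\tau \in \prod$ is dequeued during $[t_A, t_B]$, and both $t_A$ and $t_B$ are defined so that every dequeue in this window has width strictly less than $w$, the width of $\tau$ satisfies $w(\tau) = f_{q,j} - a_{q,j} < w$. This is the only fact about $\tau$ I need up front. The main step is then to show that every dequeue in $[f_{q,j}, t_A]$ also has width $<w$. At time $f_{q,j}$, GenRounding either inserts a new tuple for page $q$ or updates an existing one to width $w(\tau)$; widths can only shrink on later updates. Moreover, because $\tau \in \prod$ has at least one broadcast in $[t_A, t_B]$, $\tau$'s page is not fully transmitted before $t_A$, so throughout $[f_{q,j}, t_A]$ the tuple $\tau$ is either sitting in $\cQ$ with width at most $w(\tau)$ or is mid-broadcast.

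Now at any dequeue point $\sigma \in [f_{q,j}, t_A]$, I would case-split: if $\tau$ is in $\cQ$ just before $\sigma$, then the min-width rule picks a tuple of width $\le w(\tau) < w$; if $\tau$ is mid-broadcast at $\sigma^-$, then $\sigma$ is the end of that broadcast, and because $\tau$ still has a later broadcast lying in $[t_A, t_B]$ the just-finished broadcast cannot have been of the final piece of $\tau$'s page, so $\tau$ is re-enqueued (with its width unchanged) before the next dequeue fires---and again the dequeue at $\sigma$ picks width $\le w(\tau) < w$. Combining this with the defining property of $t_A$ (dequeues in $[t_A, f_{p,i}]$ have width $<w$) gives that every dequeue in $[f_{q,j}, f_{p,i}]$ has width $<w$; by minimality of $t_A$, this forces $t_A \le f_{q,j}$. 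Then $a_{q,j} = f_{q,j} - w(\tau) \ge t_A - w(\tau) \ge t_A - w$.

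The main obstacle I expect is the case bookkeeping in the dequeue analysis: I need to be careful that $\tau$'s own broadcast intervals do not open a gap in which a width-$\ge w$ dequeue could slip in, and that the re-enqueue of $\tau$ after each non-terminal broadcast happens logically before the next dequeue. Once one is disciplined about when $\tau$ enters, leaves, and re-enters $\cQ$ relative to the dequeue points---which is a purely algorithmic accounting---the rest of the proof is immediate.
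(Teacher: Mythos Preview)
Your argument is correct and reaches exactly the same intermediate inequality $f_{q,j} \ge t_A$ that the paper uses, but by a longer route. The paper's proof is three lines: by minimality of $t_A$, the dequeue immediately preceding $t_A$ had width at least $w$; since the algorithm always dequeues the minimum-width tuple, no tuple of width $< w$ was in $\cQ$ at that instant. Because $\tau$'s width $w(\tau) < w$ was assigned at time $f_{q,j}$ and $\tau$ persists in the system until it is broadcast in $[t_A,t_B]$, this single snapshot already forces $f_{q,j} \ge t_A$, and then $a_{q,j} = f_{q,j} - w(\tau) \ge t_A - w$.

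You instead track $\tau$ through the whole interval $[f_{q,j}, t_A]$, arguing dequeue-by-dequeue that the minimum width stays below $w$, and then invoke minimality of $t_A$. The case-split on whether $\tau$ is in $\cQ$ or mid-broadcast at each dequeue point is sound, but it is exactly the bookkeeping the paper avoids by inspecting only the one dequeue just before $t_A$: at that single moment, either $\tau$ is in $\cQ$ (impossible, since all tuples there have width $\ge w$) or $\tau$ is the tuple just dequeued (impossible, since that tuple had width $\ge w$). Your approach buys an explicit audit trail of the queue invariants for a cautious reader; the paper's buys brevity. Neither yields a stronger conclusion.
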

\begin{proof}
Since the tuple $\tau$ was broadcasted during $[t_A,t_B]$ the width $w(\tau)$ must be less than $w$.   By definition of $t_A$ there are no tuples with width less than $w$ in $\cQ$ at time $t_A-1$.  Thus, $f_{q,j} \geq t_A$ and by definition of width, $a_{q,j}  \geq f_{q,j} - w \geq t_A - w$.
\end{proof}

Let $N = |\prod|$.  All $N$  tuples in $\prod$ correspond to some unit time slot where $\cS$ preformed a broadcast.  By the previous claim, all of these broadcasts occurred after time $t_A - w$. Further,  by definition of the algorithm all of these broadcasts occurred before time $t_B$.  This is because, a tuple corresponding to a request $J_{q,k}$ will not be inserted into $\cQ$ until after time $f_{q,k}$ and all of the requests corresponding to tuples in $\prod$ are satisfied in $\cS'$ by time $t_B$.  This implies that the schedule $\cS$ uses at least $N$ time slots during $[t_A - w, t_B]$. Thus we have,  $$N \leq t_B - t_A + w+2$$

Since our algorithm has $\eps$ resource augmentation, it takes $\frac{1}{1+\eps}$ time to broadcast one tuple.  Since our algorithm is always busy during $[t_A,t_B]$ we have that $$ N \geq (1+\eps)(t_B -t_A) -3$$

This implies that $t_B - t_A \leq \frac{w +5}{\eps}$.  For the request $J_{p,i}$ we know that $a_{p,i} \geq t_A - w$ by Claim \ref{clm1}. Thus, $t_B - a_{p,i} + w \leq \frac{w +5}{\eps}$. Knowing that $f^I_{p,i} \leq t_B$ and $\eps \leq 1$, we have $(f^I_{p,i} - a_{p,i}) \leq  \frac{2w +5}{\eps} $.  Combining this with Claim \ref{clm:power} and the fact that our algorithm is online, we have Theorem \ref{thm:red}.

\end{document}